\theoremstyle{plain}
\newtheorem{Theorem}{Theorem}
\newtheorem{Lemma}{Lemma}
\newcommand{\R}{\mathbb{R}}
\begin{document}

\title{Newtonian limit and trend to equilibrium for the\\ relativistic Fokker-Planck equation}

\author{Jos\'e Antonio Alc\'antara F\'elix}
\email{jaaf@correo.ugr.es}
\author{Simone Calogero}
\email{calogero@ugr.es}
\affiliation{Universidad de Granada\\
Departamento de Matem\'atica Aplicada \\
18071 Granada, Espa\~na}
\date{}

\begin{abstract}
The relativistic Fokker-Planck equation, in which the speed of light $c$ appears as a parameter, is considered. It is shown that in the limit $c\to\infty$ its solutions converge in $L^1$ to solutions of the non-relativistic Fokker-Planck equation, uniformly in compact intervals of time. Moreover in the case of spatially homogeneous solutions,  and provided  the temperature of the thermal bath is sufficiently small, exponential trend to equilibrium in $L^1$ is established. The dependence of the rate of convergence on the speed of light is estimated. Finally, it is proved that exponential convergence to equilibrium for all temperatures holds in a weighted  $L^2$ norm.  
\end{abstract}

\maketitle

\section{Introduction}
The Fokker-Planck equation is a widely used model to describe the dynamics of particles undergoing diffusion and friction in a surrounding fluid in thermal equilibrium~\cite{Risken}. For non-relativistic particles with mass $m>0$, and in suitable physical units, the Fokker-Planck equation is given by
\begin{equation}\label{FPC}
\partial_t f+p\cdot\nabla_xf=\Delta_pf+\frac{\theta}{m}\nabla_p\cdot(pf),\quad \theta=\frac{1}{kT}.
\end{equation}
Here $f=f(t,x,p)\geq 0$ is the one-particle distribution function in phase space; the independent variables are the time $t\geq 0$, the position $x\in \R^3$ and the momentum $p\in\R^3$ of the particles.  In the definition of the dimensional constant $\theta$, $T$ is the temperature of the thermal bath and $k$ is Boltzmann's constant.
The equilibrium state of~\eqref{FPC} is given by the Maxwellian distribution, $\mathscr{M}=\exp(-\theta |p|^2/(2m))$, up to a multiplicative constant that is fixed by the total mass of the system (which is a conserved quantity).

In this paper we consider a relativistic generalization of~\eqref{FPC}  first introduced in~\cite{DH1} by stochastic calculus methods and re-discovered later in~\cite{AC} by a different argument (see~\cite{DH2} for a review on the relativistic theory of diffusion, as well as the recent papers~\cite{haba,herr} ). In the same physical units used to write~\eqref{FPC}, the relativistic Fokker-Planck equation is given by
\begin{equation}\label{RFPc}
\partial_tf+mc\frac{p}{p^{0}}\cdot\nabla_xf=\partial_{p^i}\left[D^{ij}\partial_{p^j}f+\frac{\theta}{m} p^if\right],
\end{equation}
where $c$ denotes the speed of light, $D$ is the relativistic diffusion matrix, i.e.,
\[
D^{ij}=\frac{mc}{p^{0}}\left(\delta^{ij}+\frac{p^ip^j}{m^2c^2}\right),\quad p^{0}=\sqrt{m^2c^2+|p|^2},
\]
and where the sum of repeated indexes is, as usual, understood.
The equilibrium state of~\eqref{RFPc} is given by the J\"uttner distribution $\mathscr{J}=e^{-\theta cp^0}$, again up to a multiplicative constant.

The purpose of this paper is twofold. First we prove that~\eqref{FPC} is indeed the correct Newtonian limit of~\eqref{RFPc}; in particular we show that, as $c\to\infty$, solutions of~\eqref{RFPc} converge in $L^1$ to solutions of~\eqref{FPC}.  This provides a further justification of~\eqref{RFPc} as a meaningful relativistic generalization of~\eqref{FPC}. Our second goal is to study the trend to equilibrium for solutions of the relativistic Fokker-Planck equation. The latter problem has already been considered in~\cite{C}, where it was shown that solutions of~\eqref{RFPc} confined in a torus (i.e., $x\in\mathbb{T}^3$) converge exponentially fast in time in the $L^1$ norm to the J\"uttner equilibrium, {\it provided the temperature of the thermal bath is sufficiently small}. In this paper we study the trend to equilibrium for spatially homogeneous solutions of~\eqref{RFPc}. The assumption of spatial homogeneity allows us to derive more accurate estimates on the convergence rate. Moreover it will be shown that, at least within the class of spatially homogeneous solutions, the small temperature assumption made in~\cite{C} can be (partially) dispensed of. However in order to achieve this we have to leave the natural $L^1$ framework and prove exponential convergence in a weighted $L^2$ norm.

The Newtonian limit problem is studied in Section~\ref{newlim}; the analysis of the trend to equilibrium is carried out in Section~\ref{trend}.

\section{Newtonian limit}\label{newlim}
The main purpose of this section is to prove the following theorem.
\begin{Theorem}\label{newlimtheo}
Let $0<f,f_c\in C^2((0,\infty)\times\R^{6})$ be solutions of, respectively, eq.~\eqref{FPC} and eq.~\eqref{RFPc} with initial data $0\leq f^\mathrm{in},f_c^\mathrm{in}$. Assume that $f^\mathrm{in}_c(x,p)=0$, for $|x|>R(c)$, and $R(c)$ growing at most linearly as $c\to\infty$. Assume in addition that
\begin{equation}\label{momentder}
\Gamma_{\omega,\gamma}[f_\mathrm{in}]:=\int_{\R^{6}}\left[(1+|p|^\omega) |\nabla_xf_c^\mathrm{in}|^2+(1+|p|^\gamma) |\nabla_pf_c^\mathrm{in}|^2\right] dp\,dx<\infty,
\end{equation}
for $\gamma>7$ and $\omega>9$.
Then
$\|f_c^\mathrm{in}-f^\mathrm{in}\|_{L^1}\to 0\Rightarrow \|f_c(t)-f(t)\|_{L^1}\to 0$, as $c\to\infty$, 
uniformly on compact intervals of time.
\end{Theorem}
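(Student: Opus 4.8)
The plan is to control the difference $g_c=f_c-f$ by regarding it as a solution of the \emph{non}-relativistic Fokker--Planck equation with a small source. Subtracting \eqref{FPC} from \eqref{RFPc}, and adding and subtracting $p\cdot\nabla_xf_c$ and $\Delta_pf_c$, one checks that
\[
\partial_tg_c+p\cdot\nabla_xg_c=\Delta_pg_c+\tfrac{\theta}{m}\nabla_p\cdot(pg_c)+E_c,\qquad E_c:=-\Big(mc\tfrac{p}{p^0}-p\Big)\cdot\nabla_xf_c+\partial_{p^i}\big[(D^{ij}-\delta^{ij})\partial_{p^j}f_c\big],
\]
with initial datum $f_c^\mathrm{in}-f^\mathrm{in}$. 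The linear operator in \eqref{FPC} generates an $L^1$-contraction: the field $p$ is divergence free in $x$, and testing $\Delta_p+\tfrac{\theta}{m}\nabla_p\cdot(p\,\cdot)$ against a smooth convex approximation $\beta_\varepsilon$ of $|\,\cdot\,|$ gives a nonpositive contribution (from $-\int\beta_\varepsilon''(g_c)|\nabla_pg_c|^2$ for the diffusion, and from integration by parts for the drift, the terms carrying $3\,\beta_\varepsilon'(g_c)g_c$ and $\int p\cdot\nabla_p\beta_\varepsilon(g_c)$ cancelling in the limit). Hence $\tfrac{d}{dt}\|g_c(t)\|_{L^1}\le\|E_c(t)\|_{L^1}$, so that
\[
\|f_c(t)-f(t)\|_{L^1}\le\|f_c^\mathrm{in}-f^\mathrm{in}\|_{L^1}+\int_0^t\|E_c(s)\|_{L^1}\,ds .
\]
Since the first term vanishes by hypothesis, the theorem reduces to showing $\int_0^{t_0}\|E_c(s)\|_{L^1}\,ds\to0$ as $c\to\infty$ for each fixed $t_0>0$.

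To bound $\|E_c(s)\|_{L^1}$ I would expand $\partial_{p^i}[(D^{ij}-\delta^{ij})\partial_{p^j}f_c]=(\partial_{p^i}D^{ij})\partial_{p^j}f_c+(D^{ij}-\delta^{ij})\partial_{p^i}\partial_{p^j}f_c$ and, for each of the three resulting terms, apply Cauchy--Schwarz in $(x,p)$ against the weights $1+|p|^\omega$ (for the $\nabla_xf_c$ term) and $1+|p|^\gamma$ (for the $\nabla_pf_c$ and $\nabla_p^2f_c$ terms). On one side this produces $\Gamma_{\omega,\gamma}[f_c(s)]^{1/2}$; on the other it produces ``dual'' momentum integrals---such as $\int_{\R^3}|p-mc\,p/p^0|^2(1+|p|^\omega)^{-1}\,dp$, which by the identity $|p-mc\,p/p^0|^2=|p|^6\big((p^0)^2(mc+p^0)^2\big)^{-1}$ is $O(c^{-4})$, and their analogues for $\partial_{p^k}D^{ij}$ and $D^{ij}-\delta^{ij}$ (whose eigenvalues lie between $mc/p^0-1$ and $p^0/(mc)-1$)---multiplied by the $x$-volume of $\mathrm{supp}_xf_c(s)$. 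Here enters the second key ingredient: \eqref{RFPc} has \emph{finite speed of propagation}, since it carries no $x$-diffusion and the characteristic speed $mc|p|/p^0$ is $<mc$, so $\mathrm{supp}_xf_c(s)\subseteq B_{R(c)+mcs}$, of volume $O(c^3)$ on $[0,t_0]$ because $R(c)$ grows at most linearly. The term with $\nabla_p^2f_c$ is not controlled pointwise in time, so it must instead be paired, via Cauchy--Schwarz in $(s,x,p)$, with the dissipation integral $\int_0^{t_0}\!\!\int \tfrac{mc}{p^0}(1+|p|^\gamma)|\nabla_p^2f_c|^2$. The net outcome is
\[
\int_0^{t_0}\|E_c(s)\|_{L^1}\,ds\lesssim \frac{1}{\sqrt c}\Big(\sup_{[0,t_0]}\Gamma_{\omega,\gamma}[f_c]\Big)^{1/2}+\frac{1}{\sqrt c}\Big(\int_0^{t_0}\!\!\int \tfrac{mc}{p^0}(1+|p|^\gamma)|\nabla_p^2f_c|^2\,dp\,dx\,ds\Big)^{1/2},
\]
the conditions $\omega>9$ and $\gamma>7$ being what guarantees the finiteness of the weighted momentum integrals that arise.

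It then suffices to establish, \emph{uniformly in $c\ge1$} and with a constant $C(t_0)$ locally bounded in $t_0$, the a priori estimates $\sup_{[0,t_0]}\Gamma_{\omega,\gamma}[f_c]\le C(t_0)$ and $\int_0^{t_0}\!\!\int_{\R^6}\tfrac{mc}{p^0}\big[(1+|p|^\omega)|\nabla_x\nabla_pf_c|^2+(1+|p|^\gamma)|\nabla_p^2f_c|^2\big]\le C(t_0)$. I would obtain these by a weighted energy estimate: differentiating \eqref{RFPc} in $x^l$ (the coefficients being $x$-independent, $\partial_{x^l}f_c$ solves \eqref{RFPc} again) and in $p^l$ (which produces the coupling term $mc\,\partial_{p^l}(p/p^0)\cdot\nabla_xf_c$, with a coefficient $mc\,|\nabla_p(p/p^0)|\lesssim1$ bounded uniformly in $c$); multiplying by $(1+|p|^\omega)\nabla_xf_c$ and $(1+|p|^\gamma)\nabla_pf_c$; summing and integrating. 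The transport terms vanish; the diffusion terms supply the dissipation above plus commutator terms; and the drift and commutator terms are absorbed---by Young's inequality used with care to respect the anisotropy of $D$, whose large eigenvalue $p^0/(mc)$ controls the radial momentum derivatives---into the dissipation plus a multiple of $\Gamma_{\omega,\gamma}[f_c]$, the elementary bound $p^0/(mc)\le 1+|p|/m$ (valid for $c\ge1$) ensuring that no power of $c$ survives in the constants. Gronwall then closes both bounds; one may take $7<\gamma\le\omega$ without loss, since finiteness of $\Gamma_{\omega,\gamma'}$ implies that of $\Gamma_{\omega,\gamma}$ for $\gamma\le\gamma'$.

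Putting the three steps together gives $\int_0^{t_0}\|E_c(s)\|_{L^1}\,ds\le C(t_0)\,c^{-1/2}\to0$, uniformly for $t\in[0,t_0]$ and for $t_0$ in compact sets, which proves the theorem. The main obstacle is the third step: the relativistic drift and diffusion coefficients do not converge to their Newtonian limits uniformly in $p$ (for example $D$ has eigenvalues as large as $p^0/(mc)\sim|p|/(mc)$), so securing an energy inequality whose constant is genuinely $c$-independent demands a careful accounting of the powers of $p^0/(mc)$ and of $|p|$ generated by the differentiation; and the $x$--$p$ coupling produced by the transport term is precisely what forces the two different weights $1+|p|^\omega$ and $1+|p|^\gamma$, and with them the exponent thresholds, into the statement.
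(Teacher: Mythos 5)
Your architecture is genuinely different from the paper's in the main step: the paper writes $\delta f=f-f_c$ via Duhamel against the explicit Victory--O'Dwyer Green's function $\mathcal{F}$ of the Newtonian kinetic operator and integrates by parts \emph{once onto the kernel} (using $|\nabla_w\mathcal{F}|\lesssim (t-s)^{-1/2}\mathcal{F}_\alpha$), so that only \emph{first} derivatives of $f_c$ ever need to be estimated; you instead invoke the $L^1$-contraction of the Newtonian operator and accept a source containing $\nabla_p^2 f_c$, to be paid for by the space--time dissipation of a uniform-in-$c$ weighted energy estimate. That trade is legitimate in principle, and your steps 1--2 (the $L^1$ stability inequality, $|p-mc\,p/p^0|\lesssim |p|^3/c^2$, the eigenvalue bounds on $D-\delta$, the finite-speed-of-propagation volume factor $c^3$, the final $O(c^{-1/2})$) run parallel to the paper's estimates. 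But the load-bearing step 3 does not close as you describe it, and that is a genuine gap.

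Concretely: differentiating \eqref{RFPc} in $p^l$ produces the commutator term $\partial_{p^l}D^{ij}\,\partial_{p^j}f_c\,\partial_{p^i}\partial_{p^l}f_c$, and since $\partial_{p^l}D^{ij}=(p^i\delta^{jl}+p^j\delta^{il})/(mcp^0)-p^lD^{ij}/(p^0)^2$, one of its pieces is $\frac{(p\cdot\nabla_pf_c)}{mc\,p^0}\,\Delta_pf_c$. Here the second-derivative factor is the \emph{full} Laplacian, whose tangential components are controlled only by the small eigenvalue $mc/p^0$ of $D$; the anisotropy you invoke helps the other pieces (whose Hessian factor is radially contracted and sees $p^0/(mc)$), but not this one. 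Young's inequality against the available dissipation leaves a penalty of order $\frac{p^0}{mc}\cdot\frac{|p|^2}{m^2c^2(p^0)^2}\,|\nabla_pf_c|^2\sim c^{-3}|p|\,|\nabla_pf_c|^2$, i.e.\ after weighting a moment of order $\gamma+1$, which is not assumed on the data (only some fixed $\gamma>7$) and cannot be closed by Gronwall; no $p$-dependent Young parameter $\epsilon(p)\le 1/2$ repairs this, since $|p|^2/(c^3p^0)$ is unbounded in $p$. This is precisely the delicate point of the paper's Lemma (its $I_3+I_4$ estimate), which is resolved not by absorption but by a further integration by parts and a pointwise tensor inequality, so that the surviving Hessian factor is radially contracted; you would need that (or an equivalent cancellation) both for $\sup_t\Gamma_{\omega,\gamma}[f_c]$ and for the dissipation integral you rely on to control the $(D-\delta):\nabla_p^2f_c$ part of the source. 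A secondary caution for step 2: the dual momentum integrals stay finite at $\gamma>7$ only if you keep the $(p^0)^{-1}$ factors (e.g.\ pair the $p\otimes p/(mcp^0)$ part of $D-\delta$ with the radial dissipation); with the crude bound $p^0/(mc)\le 1+|p|/m$ you would in fact need $\gamma>8$.
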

Throughout the paper we work with smooth solutions of~\eqref{FPC} and~\eqref{RFPc} to avoid technical difficulties.
Moreover the compact support assumption on $f^\mathrm{in}_c$ in the $x$ variable can be removed by adding suitable powers of $|x|$ inside the integral~\eqref{momentder} (which would also allow to treat the general dimension case). We prefer to sacrifice the generality of the assumptions for the benefit of a shorter and less technical proof.
 
{\bf Remark about the notation:} In the following, $A\lesssim B$ means that there exists a non-decreasing function of time (possibly a constant) $C(t)$, independent of $c>1$, such that $A\leq C(t) B$. Since we are only interested in the limiting behavior as $c\to\infty$, the assumption $c>1$ is not a restriction. 

Before proving Theorem~\ref{newlimtheo}, we show that the solution of the relativistic Fokker-Planck equation inherits the bound~\eqref{momentder} on the initial data.
\begin{Lemma}\label{moment}
If~\eqref{momentder} holds, then $\Gamma_{\omega,\gamma}[f]<\infty$, for all $\gamma,\omega\geq 0$ and $t>0$.
\end{Lemma}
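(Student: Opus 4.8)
The plan is to propagate the finiteness of $\Gamma_{\omega,\gamma}$ by a weighted $L^2$ energy estimate on the first $x$- and $p$-derivatives of the relativistic solution (in view of the sentence preceding it the lemma asserts that the solution $f_c$ of \eqref{RFPc} inherits \eqref{momentder}; I write $f_c$ throughout, the non‑relativistic statement being a strictly simpler instance since there the diffusion matrix is constant). Because the coefficients of \eqref{RFPc} do not depend on $x$, each $g:=\partial_{x^k}f_c$ solves \eqref{RFPc} again. I would test this against $(1+|p|^\omega)g$ and integrate over $\R^6$: the transport term drops after an integration by parts in $x$ — legitimate because the $x$–support of $f_c(t)$ spreads at speed $mc\,|p|/p^0<mc$ and so stays compact — and integrating by parts in $p$ leaves $-\int(1+|p|^\omega)D^{ij}\partial_{p^i}g\,\partial_{p^j}g\le0$ together with terms coming from $\nabla_p\!\cdot\!\big[(1+|p|^\omega)p\big]=3+(3+\omega)|p|^\omega$ and from the weight meeting $D$. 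Using $p^iD^{ij}=\frac{p^0}{mc}p^j$ and $p^0\le mc+|p|$, the latter reduce to $|p|^{\le\omega}$–weighted quantities bounded by $C(\omega)\int(1+|p|^\omega)g^2$ (the confining term $-\frac{\theta\omega}{m}\int|p|^\omega g^2$ only helps). Thus, with $E_\omega:=\int(1+|p|^\omega)|\nabla_xf_c|^2$, one gets $\frac{d}{dt}E_\omega\le C(\omega)E_\omega$, and Gronwall gives $E_\omega(t)<\infty$ for all $t>0$ and all $\omega\ge0$, $E_\omega(0)$ being finite by \eqref{momentder}.

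The $p$–derivative estimate is where the work is, since $mc\,p^l/p^0$ and $D^{ij}$ depend on $p$. Commuting $\partial_{p^k}$ through \eqref{RFPc}, the function $h:=\partial_{p^k}f_c$ solves
\begin{equation*}
\partial_th+mc\frac{p^l}{p^0}\partial_{x^l}h=\partial_{p^i}\!\Big[D^{ij}\partial_{p^j}h+\frac{\theta}{m}\,p^ih\Big]+\partial_{p^i}\!\big[(\partial_{p^k}D^{ij})\,\partial_{p^j}f_c\big]+\frac{\theta}{m}\,h-mc\,\big(\partial_{p^k}\tfrac{p^l}{p^0}\big)\partial_{x^l}f_c .
\end{equation*}
Testing against $(1+|p|^\gamma)h$: the first bracket is handled exactly as above and yields $-\mathcal D_\gamma+C(\gamma)F_\gamma$, where $F_\gamma:=\int(1+|p|^\gamma)|\nabla_pf_c|^2$ and $\mathcal D_\gamma:=\int(1+|p|^\gamma)D^{ij}\partial_{p^i}h\,\partial_{p^j}h\ge0$; the term $\frac{\theta}{m}h$ contributes $\frac{\theta}{m}F_\gamma$; and the last, transport‑type term, since $|\partial_{p^k}(p^l/p^0)|\le 2/p^0\le 2/(mc)$, is bounded by $C\int(1+|p|^\gamma)|h|\,|\nabla_xf_c|\le\tfrac12F_\gamma+\tfrac12E_\gamma$, with $E_\gamma$ already controlled by the first step.

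The one genuinely new term — and the main obstacle — is $\int(1+|p|^\gamma)h\,\partial_{p^i}\big[(\partial_{p^k}D^{ij})\partial_{p^j}f_c\big]$: a crude Cauchy--Schwarz here costs one power of $|p|$ (it produces $F_{\gamma+1}$ on the right) and the estimate then does not close. The remedy is the explicit structure of $D$: one has
\begin{equation*}
\partial_{p^k}D^{ij}=-\frac{p^k}{(p^0)^2}\,D^{ij}+\frac{1}{mc\,p^0}\big(\delta^{ik}p^j+\delta^{jk}p^i\big),
\end{equation*}
which, together with $p^iD^{ij}=\frac{p^0}{mc}p^j$ and the two lower bounds $D^{ij}\xi_i\xi_j\ge\frac{mc}{p^0}|\xi|^2$ and $D^{ij}\xi_i\xi_j\ge\frac{(p\cdot\xi)^2}{mc\,p^0}$ (the latter says $D$ dissipates the radial part of $\nabla_ph$ much more strongly), I would exploit as follows. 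After integrating by parts in $p$ and substituting the decomposition, every resulting term has the shape $\int(\text{bounded coefficient})\,\partial_p\!\big[(1+|p|^\gamma)h\big]\,\partial_pf_c$, and a Cauchy--Schwarz that assigns the dissipation weight ($\frac{mc}{p^0}$, or $\frac{1}{mc\,p^0}$ paired with $p\cdot\nabla_ph$) to the $\partial_ph$–factor absorbs it into $\varepsilon\,\mathcal D_\gamma$, while the complementary factor carries one of the decaying weights $\frac{|p|}{(p^0)^2}$ or $\frac{1}{mc\,p^0}$ — precisely what cancels the extra power of $|p|$ — leaving a remainder $\le C(\gamma)\,c^{-2}(F_\gamma+E_\gamma)$, with no loss of moments. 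Collecting all contributions gives $\frac12\frac{d}{dt}F_\gamma\le-\tfrac12\mathcal D_\gamma+C(\gamma)(F_\gamma+E_\gamma)$, and since $E_\gamma(t)$ is already bounded, Gronwall yields $F_\gamma(t)<\infty$ for all $t>0$ and all $\gamma\ge0$; together with the first step this gives $\Gamma_{\omega,\gamma}[f_c(t)]<\infty$.

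Two routine points remain. First, the integrations by parts in $p$ presuppose decay of $f_c$ and $\nabla_pf_c$ as $|p|\to\infty$, which I would install rigorously by a cutoff in $|p|$ and a limiting argument (harmless in the smooth class of solutions considered here). Second, it is enough to prove the bounds for $\omega,\gamma$ large, since smaller weights are dominated and \eqref{momentder} supplies the corresponding data. The essential difficulty is concentrated entirely in the diffusion‑matrix term above; everything else is either the classical Fokker--Planck regularity computation or an elementary coupling to the self‑contained $x$–derivative estimate.
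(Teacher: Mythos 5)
Your overall strategy is the same as the paper's: since the coefficients are $x$-independent, run a weighted $L^2$ energy estimate first on $\nabla_x f_c$ (transport term vanishes, diffusion term has a sign, lower-order terms controlled via $p_iD^{ij}=\tfrac{p^0}{mc}p^j$, then Gronwall), and then on $\nabla_p f_c$, where the only real issue is the commutator $\partial_{p^i}[(\partial_{p^k}D^{ij})\partial_{p^j}f_c]$; you even use exactly the paper's decomposition $\partial_{p^k}D^{ij}=\tfrac{\delta^{ik}p^j+\delta^{jk}p^i}{mc\,p^0}-\tfrac{p^k}{(p^0)^2}D^{ij}$. The difference is in how the critical term is closed: you propose absorbing it into the dissipation $\mathcal D_\gamma$ by an anisotropic Cauchy--Schwarz (isotropic weight $\tfrac{mc}{p^0}$, radial weight $\tfrac{1}{mc\,p^0}$), whereas the paper integrates by parts once more so that \emph{no} second derivatives of $f_c$ remain, and concludes with a pointwise algebraic inequality for the resulting quadratic form $\Delta^{jk}\partial_{p^j}f\,\partial_{p^k}f$, helped by the sign of $\partial_{p^k}\bigl[\tfrac{|p|^\gamma p^k}{(p^0)^3}\bigr]$ and by the negative term $-\tfrac{2\gamma}{mc}\int\tfrac{|p|^\gamma}{p^0}|v|^2$ produced by the weight.

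There is, however, a gap at precisely the step you single out as the main obstacle. After your integration by parts, the piece $\tfrac{\delta^{ik}p^j}{mc\,p^0}$ of the decomposition produces the term
\begin{equation*}
-\frac{2}{mc}\int (1+|p|^\gamma)\,\partial_{p^k}h\;\frac{p\cdot\nabla_p f_c}{p^0}\,dp ,
\end{equation*}
in which the derivative of $h=\partial_{p^k}f_c$ that appears is $\partial_{p^k}h$, not the radial derivative $p\cdot\nabla_p h$. Hence neither of your two advertised pairings applies: the radial dissipation weight $\tfrac{1}{mc\,p^0}$ is unavailable, and pairing $\partial_{p^k}h$ with the isotropic weight $\tfrac{mc}{p^0}$ leaves the complementary factor $\tfrac{p\cdot\nabla_pf_c}{mc\,p^0}$, whose square against $\tfrac{p^0}{mc}$ is of size $\tfrac{|p|}{(mc)^3}(1+|p|^\gamma)|\nabla_pf_c|^2$ --- exactly the one-moment loss you set out to avoid, so the estimate does not close as written. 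The fix stays within your framework but needs one more idea: integrate by parts once more in $p^k$ and use $\partial_{p^k}\partial_{p^j}f_c=\partial_{p^j}h$, which converts the term into $\tfrac{2}{mc}\int(1+|p|^\gamma)\,h\,\tfrac{p\cdot\nabla_p h}{p^0}\,dp$ plus terms carrying the decaying weight $\tfrac{1}{mc\,p^0}$; the radial Cauchy--Schwarz then does apply. This is in effect what the paper's extra integration by parts accomplishes systematically, which is why its proof works with first derivatives only. With that repair (and summing over $k$ at the end), your argument closes and reproduces the lemma.
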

\begin{proof}
1. Let $u=\nabla_x f$ and set $\beta=\theta/m$. Since each component of $u$ satisfies~\eqref{RFPc} we have
\begin{align*}
\partial_t\int_{\R^6} |p|^\omega |u|^2\,dp\,dx=&-mc\int_{\R^6}|p|^\omega\frac{p}{p^0}\cdot\nabla_x|u|^2\,dp\,dx\\
&+2\int_{\R^6}|p|^\omega\partial_{p^i}(D^{ij}\partial_{p^i}u+\beta p^iu)\cdot u\,dp\,dx.
\end{align*}
The first term in the right hand side vanishes; integrating by parts in the second term we obtain
\begin{align*}
\partial_t\int_{\R^6} |p|^\omega |u|^2\,dp\,dx=&-2\int_{\R^6} |p|^\omega D^{ij}\partial_{p^i}u\cdot\partial_{p^j}u\,dp\,dx\\
&+\beta( 3-\omega)\int_{\R^6}|p|^\omega |u|^2\,dp\,dx\\
&+\omega\int_{\R^6}\partial_{p^j}(|p|^{\omega-2}p_iD^{ij})|u|^2\,dp\,dx\\
\leq&\ \beta( 3-\omega)\int_{\R^6}|p|^\omega |u|^2\,dp\,dx\\
&+\omega\int_{\R^6}\partial_{p^j}(|p|^{\omega-2}p_iD^{ij})|u|^2\,dp\,dx
\end{align*}
where we used that $D^{ij}\partial_{p^i}u\cdot\partial_{p^j}u\leq 0$, since $D$ is positive definite. 
The result follows for $\omega=0$. Now, for the case $w\geq 2$, using that $p_iD^{ij}=(mc)^{-1}p^jp^{0}$ we have 
\[
\partial_{p^j}(|p|^{\omega-2}p_iD^{ij})=(mc)^{-1}[(\omega+1)p^{0}|p|^{\omega-2}+|p|^{\omega}(p^{0})^{-1}]\lesssim (1+|p|)|p|^{\omega-2}.
\]
Therefore
\[
\partial_t\int_{\R^6} |p|^\omega |u|^2dp\,dx\lesssim \int_{\R^6} (1+|p|^\omega)|u|^2dp\,dx
\]
and the bound on the integral of $|p|^\omega |u|^2$ follows for $\omega\geq 2$ and, by interpolation, for all $\omega\geq 0$.  

2. Now,  let $v=\nabla_p f$. Using a similar argument as above for the integral of $|p|^\gamma|v|^2$ we obtain
\begin{align*}
\partial_t\int_{\R^6} |p|^\gamma |v|^2 \,dp\,dx  \leq & \,\beta( 5-\gamma)\int_{\R^6} |p|^\gamma|v|^2 \,dp\,dx\\
&+\gamma\int_{\R^6}\partial_{p^j}(|p|^{\gamma-2}p_iD^{ij})|v|^2 \,dp\,dx-\frac{2\gamma}{mc}\int_{\R^6}\frac{|p|^{\gamma} }{p^0}|v|^2\,dp\,dx\\
&-2\int_{\R^6}|p|^{\gamma}\partial_{p^k}(D^{ij})\partial_{p^j}f\partial_{p^i}(\partial_{p^k}f)\,dp\,dx
\\
&-2mc\int_{\R^6} |p|^{\gamma}\partial_{p^k}\left(\frac{p^i}{p^0}\right)\partial_{x^i}f\partial_{p^k}f \,dp\,dx\\
=&\ I_1+I_2+I_3 +I_4+I_5 , 
\end{align*}
where we used that $\partial_{p^k}D^{ij}=(p^i\delta^{jk}+p^j\delta^{ik})/mcp^0-p^k D^{ij}/(p^0)^2$. From here,  we notice that $I_1+I_2$ can be treated as in the first part of the proof and  $I_5$ can be estimated using Cauchy's inequality
\[
I_5\leq \int |p|^{\gamma}(|v|^2+|u|^2) \,dp\,dx.
\]   
Moreover, splitting $\partial_{p^k}D^{ij}$,  integrating by parts and using that $\partial_{p^k}\left[\frac{|p|^\gamma p^k}{(p^0)^3}\right]>0$, we obtain  
\begin{align*}
I_4 =&\ \frac{1}{mc} \int_{\R^{6}}\left(2\partial_{p^k}\left[\frac{|p|^{\gamma}p^j}{p^0} \right]-\partial_{p^i}\left[\frac{|p|^\gamma p^ip^jp^k}{(p^{0})^3}\right]\right)\partial _{p^k} f  \partial _{p^j} f \,dp\,dx\\
&-mc\int_{\R^{6}}\partial_{p^k}\left[\frac{|p|^\gamma p^k}{(p^{0})^3}\right]|v|^2\,dp\,dx \\
\leq&\  \frac{1}{mc} \int_{\R^{6}}\left(2\partial_{p^k}\left[\frac{|p|^{\gamma}p^j}{p^0} \right]-\partial_{p^i}\left[\frac{|p|^\gamma p^ip^jp^k}{(p^{0})^3}\right]\right)\partial _{p^k} f  \partial _{p^j} f \,dp\,dx,
\end{align*}
Let 
\[
\Delta^{jk}=(p^0)^3 \left(2\partial_{p^k}\left[\frac{|p|^{\gamma}p^j}{p^0} \right]-\partial_{p^i}\left[\frac{|p|^\gamma p^ip^jp^k}{(p^{0})^3}\right]\right).
\]
Since $p^j\partial_{p^j} fp^k\partial_{p^k}f-|p|^{2}|v|^2\leq 0$, we have 
\begin{align*}
\Delta^{jk}\partial_{p^j}f\partial_{p^k}f-2\gamma|p|^{\gamma} (p^0)^2|v|^2&\leq 2 |p|^{\gamma-2}(p^0)^2\{|p|^2|v|^2+ \gamma[(p^k\partial_{p^k}f)^2-|p|^2 |v|^2]
\\
-(p^0)^{-2}(\gamma/2+2)|p|^2(p^k\partial_{p^k}f)^2\}&\leq 2(p^0)^2 |p|^{\gamma}|v|^2.
\end{align*}
From the last inequality we see that
\begin{align*}I_3+I_4 \leq  \frac{2}{m}\int_{\R^{6}} \frac{|p|^{\gamma}}{p^0}|v|^2\,dp\,dx,
\end{align*}
and the claim follows as in part 1.
\end{proof}

\begin{proof}[Proof of Theorem~\ref{newlimtheo}] The difference $\delta f=(f-f_c)$ is a smooth solution of
\begin{equation}\label{deltaf}
\partial_t\delta f+p\cdot\nabla_x\delta f =\frac{\theta}{m}\nabla_p\cdot(p\delta f)+\Delta_p\delta f+g_c\:,
\end{equation}
where
\[
g_c=\Delta_p f_c-\partial_{p^i}\big(D^{ij}\partial_{p^j}f_c\big)+\left[\frac{mc}{p^{0}}-1\right]p\cdot\nabla_xf_c.
\]
Let $\mathcal{F}(t,x,p,y,w)$ denote the two-point Green function of the non-relativistic Fokker-Planck equation~\eqref{FPC}. In terms of $\mathcal{F}$, the solution of~\eqref{FPC} is given by
\[
f(t,x,p)=\int_{\R^6}\mathcal{F}(t,x,p,y,w)f(0,y,w)\,dy\,dw.
\]
Since~\eqref{deltaf} reduces to~\eqref{FPC} when $g_c=0$, the  Duhamel's principle entails that the solution of~\eqref{deltaf}
can be represented as
\begin{align}
\delta f(t,x,p)=& \int_{\R^6}\mathcal{F}(t,x,p,y,w)\delta f(0,y,w)\,dw\,dy\nonumber\\
& +\int_0^t\int_{\R^6}\mathcal{F}(t-s,x,p,y,w)g_c(s,y,w)\,dw\,dy\,ds,\label{tempooo}
\end{align}
for $t\geq s$. The exact form of $\mathcal{F}$ is 
\begin{align*}
\mathcal{F}(t,x,p,y,w)=\left[\frac{\beta \exp\{\beta t\}}{4\pi\sqrt{a(2\beta,t)t-a^2(\beta,t)}}\right]^6 \exp\left\{-\frac{b(t,x,p,y,w)}{4t} \right\},
\end{align*}
where $\beta=\theta /m$,  $a(\beta,t)=\frac{\exp\{\beta t\}-1}{\beta}$ and
\begin{align*}
b(t,x,p,y,w)=&\ |\beta(x-y)+(p-w)|^2\\
&+\frac{\left|a(\beta,t)\left\{\beta(x-y)+(p-w)\right\}+t(w-p\exp \left\{\beta t\right\})\right|^2}{ta(2\beta,t)-a^2(\beta, t)},
\end{align*}
see~\cite[Eq.~(2.5)]{Car}.  Now, in the second term in the right hand side of~\eqref{tempooo} we integrate by parts {\it once} in the variable $w$ and obtain
\begin{align}
\delta f=&\int_{\R^6}\mathcal{F}(t,x,p,y,w)\delta f(0,y,w)\,dw\,dy\nonumber\\
&-\int_0^t\int_{\R^6}\nabla_w\mathcal{F}(t-s,x,p,y,w)\cdot X(f_c)(s,y,w)\,dw\,dy\,ds\nonumber\\
&+\int_0^t\int_{\R^6}\mathcal{F}(t-s,x,p,y,w)\left[\frac{mc}{w^{0}}-1\right]w\cdot\nabla_yf_c(s,y,w)\,dw\,dy\,ds,\label{repfor2}
\end{align}
where $X$ is the vector field $X^i=\partial_{w_i}-D^{ij}\partial_{w_j}$. Now we use that
\begin{align*}
\left|1-\frac{mc}{w^{0}}\right|&=\left|\frac{\sqrt{m^2c^2+|w|^2}-mc}{\sqrt{m^2c^2+|w|^2}}\right|\\
&=\frac{1}{\sqrt{m^2c^2+|w|^2}+mc}\frac{|w|^2}{\sqrt{|w|^2+m^2c^2}}\lesssim\frac{|w|^2}{c^2},
\end{align*} 
by which we also have
\begin{align*}
|X(f_c)|&\leq \sup_{i,j}|\delta^{ij}-D^{ij}||\nabla_w f_c|=\sup_{i,j}\left|\delta^{ij}\left(1-\frac{mc}{w_0}\right)-\frac{w^iw^j}{w_0mc}\right||\nabla_wf_c|\\
&\lesssim \frac{|w|^2}{c^2}|\nabla_wf_c|.
\end{align*}
Using the estimates just derived in~\eqref{repfor2} we obtain
\begin{align*}
\|\delta f_c(t)\|_{L^1}\lesssim&\int_{\R^6}\left(\int_{\R^6} \mathcal{F}(t,x,p,y,w)\,dp\,dx\right)|\delta f(0,y,w)|\,dw\,dy\nonumber\\
&+\frac{1}{c^2}\int_0^t\int_{\R^6} |w|^3|\nabla_yf_c|\left(\int_{\R^6} \mathcal{F}(t-s,x,p,y,w)\,dp\,dx\right)\,dw\,dy\,ds\nonumber\\
&+\frac{1}{c^2}\int_0^t\int_{\R^6} |w|^2|\nabla_wf_c|\left(\int_{\R^6} |\nabla_w\mathcal{F}|(t-s,x,p,y,w)\,dp\,dx\right)\,dw\,dy\,ds.
\end{align*} 
From here, we take into account that $\mathcal{F}$ has the following properties (see~\cite{Car})
\begin{align*}
\int_{\R^6} \mathcal{F}(t,x,p,y,w)\,dp\,dx&=1,\\
|\nabla_w\mathcal{F}|(t-s,x,p,y,w)&\leq \frac{C(\alpha,\beta)}{\sqrt{t-s}}\mathcal{F}(t-s,\alpha x,\alpha p,\alpha y, \alpha w)
\end{align*}
with $0<\alpha<1$ and estimate the integrals in the variables $(x,p)$ to obtain
\begin{align}\label{temporale}
\|\delta f(t)\|_{L^1}\lesssim&\|\delta f(0)\|_{L^1}+\frac{1}{c^2}\int_0^t\int_{\R^6} |w|^3|\nabla_y f_c|\,dw\,dy\,ds\nonumber\\
&+\frac{1}{c^2}\int_0^t\frac{1}{\sqrt{t-s}}\int_{\R^6} |w|^2|\nabla_w f_c|\,dw\,dy\,ds,
\end{align}
By the finite propagation speed property of the relativistic Fokker-Planck equation proved in~\cite{AC}, and the assumption that $f^\mathrm{in}_c=0$ for $|y|>R$, the solution of~\eqref{RFPc} satisfies $f_c=0$ for $|y|\geq R+ct$. Whence
\begin{align*}
\int_{\R^6}|w|^2|\nabla_w f_c|\,dw\,dy\leq& \int_{|y|\lesssim c}\int_{|w|<1}|\nabla_wf_c|\,dw\,dy\\
& +\int_{|y|\lesssim c}\int_{|w|\geq 1}|w|^2|\nabla_w f_c|\,dw\,dy\\
\lesssim&\ c^{3/2}\left(\int_{\R^{6}}|\nabla_w f_c|^2dw\,dy\right)^{1/2}\\
&+c^{3/2}\left(\int_{|w|\geq 1}|w|^{4-\gamma}dw\right)^{1/2}\left(\int_{\R^6}|w|^\gamma|\nabla_wf_c|^2dw\,dy\right)^{1/2}
\end{align*}
and so for $\gamma>7$ the integral in the left hand side is $O(c^{3/2})$. By exactly the same argument
\begin{align*}
\int_{\R^6}|w|^3|\nabla_y f_c|\,dw\,dy
\lesssim& \ c^{3/2}\left(\int_{\R^{6}}|\nabla_y f_c|^2dw\,dy\right)^{1/2}\\
&+c^{3/2}\left(\int_{|w|\geq 1}|w|^{6-\omega}dw\right)^{1/2}\left(\int_{\R^6}|w|^\omega|\nabla_yf_c|^2dw\,dy\right)^{1/2}
\end{align*}
and for $\omega>9$ the integral in the left hand side is $O(c^{3/2})$.  Using these estimates in~\eqref{temporale} we get
\[
\|\delta f(t)\|_{L^1}\lesssim\|\delta f(0)\|_{L^1}+O(1/\sqrt{c})
\]
and the theorem follows.
\end{proof}

\section{Trend to equilibrium}\label{trend}
In this section we restrict to spatially homogeneous solutions of~\eqref{RFPc}. 
Moreover for the analysis of the trend to equilibrium it is more convenient to rewrite the relativistic Fokker-Planck equation in terms of $h=f/\mathscr{J}$. We obtain
\begin{equation}
\partial_th =\partial_{p^i}\left[\frac{mc}{p^0}\left(\delta^{ij}+\frac{p^ip^j}{m^2c^2}\right)\partial_{p^j}h\right]-\frac{\theta}{m} p\cdot\nabla_ph,
\end{equation}
or equivalently,
\begin{equation}\label{FPRnew}
\partial_t h =\Delta^{(g)}_ph+Wh,
\end{equation}
where the Riemannian metric $g$ and the vector field $W$ are given by
\begin{equation}\label{metric}
g_{ij}=\frac{1}{mc}\left(p^0\delta_{ij}-\frac{p_ip_j}{p^0}\right),\quad
Wh=W^i\partial_{p^i}h,\quad W^i=-\frac{1}{m}\left(\theta+\frac{1}{2p^0c}\right)p^i
\end{equation}
and $\Delta_p^{(g)}$ denotes the Laplace-Beltrami operator of the metric $g$.
Note that $W_i=g_{ij}W^j=\partial_{p^i}\log u$, where $u$ denotes the function
\begin{equation}\label{u}
u=\frac{e^{-\theta cp^0}}{\sqrt{\det g}}=\sqrt{\frac{mc}{p^0}}e^{-\theta cp^0}.
\end{equation}
Let 
\begin{equation}\label{measure}
d\mu_\theta=Z^{-1}e^{-\theta c p^0}dp,\quad Z=\int_{\mathbb{R}^3}e^{-\theta cp^0}dp, 
\end{equation}
so that $d\mu_\theta$ is a probability measure. The reason to emphasize the dependence of the measure $\mu$ on the parameter $\theta$ will become clear soon. In the following we denote by $h$ a solution of~\eqref{FPRnew} normalized to a probability density measure:
\[
\|h\|_{L^1(d\mu_\theta)}=\int_{\mathbb{R}^3}h\,d\mu_\theta=1.
\]
This normalization can always be achieved by rescaling the solution.
The entropy functional and the entropy dissipation functional are defined by
\[
\mathfrak{D}[h]=\int_{\mathbb{R}^3}h\log h\, d\mu_\theta,\quad
\mathfrak{I}[h]=\int_{\mathbb{R}^3} g(\partial_p h,\partial_p\log h)\,d\mu_\theta,
\]
and the following entropy identity holds:
\begin{equation}\label{entropyid}
\frac{d}{dt}\mathfrak{D}[h](t)=-\mathfrak{I}[h](t).
\end{equation}
A solution of~\eqref{FPRnew} is said to converge to equilibrium in the entropic sense if $\mathfrak{D}[h]\to 0=\mathfrak{D}[1]$ as $t\to\infty$, and with exponential rate if $\mathfrak{D}[h]=O(e^{-\lambda t})$, as $t\to\infty$, for some $\lambda>0$. A sufficient condition for exponential decay of the entropy is the validity of the following logarithmic Sobolev inequality:
\begin{equation}\label{logsob}
\int_{\mathbb{R}^3}h\log h\, d\mu_\theta\leq\alpha\int_{\mathbb{R}^3} g(\partial_p h,\partial_p\log h)\,d\mu_\theta,\quad \text{for some $\alpha>0$}
\end{equation}
and for all sufficiently smooth probability densities measure $h$ (not necessarily solutions of~\eqref{FPRnew}).
In fact using~\eqref{logsob} in~\eqref{entropyid} we obtain
\[
\frac{d}{dt}\mathfrak{D}[h]\leq-\frac{1}{\alpha}\mathfrak{D}[h]\Rightarrow\mathfrak{D}[h]\lesssim\exp(-t/\alpha).
\]
The Cisz\'ar-Kullback inequality, $\|h-1\|_{L^1(d\mu_\theta)}\leq\sqrt{2\mathfrak{D}}$, see~\cite{CS},  implies  that $h$ converges to equilibrium in $L^1(d\mu_\theta)$ with exponential rate $(2\alpha)^{-1}$, or equivalently, the solution of~\eqref{FPRnew} satisfies
\begin{equation}\label{expconv}
\|f(t)-\mathscr{J}_M\|_{L^1(dp)}\lesssim e^{-t/(2\alpha)},
\end{equation} 
where $\mathscr{J}_M$ denotes the J\"uttner equilibrium with mass $M=\|f\|_{L^1(dx)}$. Clearly,~\eqref{expconv} provides the most natural notion of convergence to equilibrium for solutions to the relativistic Fokker-Planck equation.
 
Thus the question of exponential trend to equilibrium in $L^1$ has been reduced to prove that~\eqref{logsob} holds.
\begin{Theorem}\label{expdecayen}
The logarithmic Sobolev inequality~\eqref{logsob} holds for $\theta>\theta_0=\frac{7}{2mc^2}$, for a constant $\alpha$ given by
\[
\frac{1}{2\alpha}=\left\{
\begin{array}{ll}
\mathscr{P}(mc)=\frac{2\theta mc^2-7}{2mc^2},& \text{if} \ \theta_0<\theta\leq\frac{4}{mc^2},\\
\\
\mathscr{P}\left(\frac{2}{13}\theta mc^2+\frac{mc}{13}\sqrt{4\theta^2m^2c^4-39}\right),&\text{if}\ \theta>\frac{4}{mc^2},
\end{array}\right.
\]
where $\mathscr{P}(x)$ is the rational function
\[
\mathscr{P}(x)=\frac{2\theta c x^3-13x^2+2\theta m^2c^3x-m^2c^2}{4mcx^3}.
\]
\end{Theorem}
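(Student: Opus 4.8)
The plan is to read equation~\eqref{FPRnew} as the Kolmogorov forward equation of a diffusion generated by a weighted Laplace--Beltrami operator and then to invoke the Bakry--\'Emery curvature criterion. Since $u\sqrt{\det g}=e^{-\theta cp^0}$, the operator $\Delta_p^{(g)}+W$ with $W=\nabla^{(g)}\log u$ is symmetric with respect to the probability measure $d\mu_\theta$, which up to normalisation equals $e^{-V}\,\mathrm{dvol}_g$ with $V:=-\log u=\theta cp^0+\tfrac12\log(p^0/(mc))$; moreover $\mathfrak D[h]=\mathrm{Ent}_{\mu_\theta}(h)$ and $\mathfrak I[h]=\int\langle\nabla h,\nabla\log h\rangle_g\,d\mu_\theta=4\int|\nabla\sqrt h\,|_g^2\,d\mu_\theta$. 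The metric $g$ on $\R^3$ is complete, because the $g$-distance from the origin to the sphere $|p|=r$ equals $\int_0^r\sqrt{mc/p^0}\,dr'$, which diverges as $r\to\infty$. Hence, by the Bakry--\'Emery theorem, the logarithmic Sobolev inequality~\eqref{logsob} holds with $1/(2\alpha)=\rho$ as soon as the Bakry--\'Emery Ricci tensor satisfies $\mathrm{Ric}_g+\mathrm{Hess}_g V\ge\rho\,g$ pointwise on $\R^3$ for some $\rho>0$. The whole problem is thereby reduced to computing this curvature lower bound, i.e.\ the infimum over $p\in\R^3$ of the smaller of the two eigenvalues of $\mathrm{Ric}_g+\mathrm{Hess}_g V$ relative to $g$.

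To carry out this computation I would exploit the $SO(3)$-invariance of $g$ and $V$ and put $g$ in rotationally symmetric form $g=ds^2+\phi(s)^2\,d\Omega^2$ on $\R^3$, with geodesic radial coordinate defined by $ds=\sqrt{mc/p^0}\,dr$ (hence $\tfrac{d}{ds}=\sqrt{p^0/(mc)}\,\tfrac{d}{dr}$) and warping function $\phi=r\sqrt{p^0/(mc)}$, where $r=|p|$ and $r^2=(p^0)^2-m^2c^2$. For such a metric the standard formulas give, relative to $g$: for $\mathrm{Ric}_g$ the radial eigenvalue $-2\phi''/\phi$ and the (double) transverse eigenvalue $-\phi''/\phi+(1-(\phi')^2)/\phi^2$; for $\mathrm{Hess}_g V$ the radial eigenvalue $V''$ and the transverse eigenvalue $V'\phi'/\phi$, primes meaning $d/ds$. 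A direct calculation --- in which the powers of $r$ cancel --- shows that $\phi''/\phi$, $(1-(\phi')^2)/\phi^2$, $V''$ and $V'\phi'/\phi$ are rational in the single variable $x=p^0\in[mc,\infty)$; assembling them, the radial eigenvalue of $\mathrm{Ric}_g+\mathrm{Hess}_g V$ turns out to be exactly $\mathscr P(x)$, while it is dominated by the transverse one, the difference being $\frac{(x^2-m^2c^2)(4\theta cx+1)}{4mcx^3}\ge 0$. Therefore $\rho=\inf_{x\ge mc}\mathscr P(x)$.

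It then remains to minimise the rational function $\mathscr P$ over $[mc,\infty)$. One computes $\mathscr P'(x)=\frac{13x^2-4\theta m^2c^3x+3m^2c^2}{4mcx^4}$; for $\theta>\theta_0$ the quadratic in the numerator has two positive roots $x_\pm=\tfrac{mc}{13}\big(2\theta mc^2\pm\sqrt{4\theta^2m^2c^4-39}\big)$ (real since $\theta_0>\sqrt{39}/(2mc^2)$), so $\mathscr P$ increases on $(0,x_-)$, decreases on $(x_-,x_+)$ and increases on $(x_+,\infty)$. A short check shows that $x_+\le mc$ if and only if $\theta\le 4/(mc^2)$, and that $x_-<mc$ whenever $\theta>4/(mc^2)$. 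Hence the infimum over $[mc,\infty)$ is attained at the endpoint $x=mc$ for $\theta_0<\theta\le 4/(mc^2)$ and at the interior minimiser $x=x_+=:x_*$ for $\theta>4/(mc^2)$, yielding the two cases in the statement with $\rho=\mathscr P(mc)$ and $\rho=\mathscr P(x_*)$ respectively. Positivity of $\rho$ is checked last: $\mathscr P(mc)>0$ precisely when $\theta>\theta_0$, which disposes of the first regime; in the second regime, since $x_*$ is a critical point of $\mathscr P(\cdot)$ the envelope identity gives $\tfrac{d}{d\theta}\mathscr P(x_*(\theta))=\partial_\theta\mathscr P(x_*(\theta))>0$, and evaluating at $\theta=4/(mc^2)$ (where $x_*=mc$) shows $\mathscr P(x_*)\big|_{\theta=4/(mc^2)}=\tfrac{1}{2m^2c^2}>0$, hence $\mathscr P(x_*)>0$ throughout.

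The main obstacle is the middle step: honestly computing $\mathrm{Ric}_g$ (equivalently $\phi$ and its derivatives) and $\mathrm{Hess}_g V$, and then grinding the algebra that collapses the radial eigenvalue of $\mathrm{Ric}_g+\mathrm{Hess}_g V$ to exactly the rational function $\mathscr P(x)$. Once that identity is in place --- together with the pleasant fact that the transverse eigenvalue never falls below the radial one --- the remainder is a one-variable calculus exercise whose only delicate point is the dichotomy $x_*\lessgtr mc$ governed by the threshold $\theta=4/(mc^2)$.
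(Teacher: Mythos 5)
Your proposal follows essentially the same route as the paper: both invoke the Bakry--\'Emery curvature criterion for the weighted generator on $(\R^3,g,d\mu_\theta)$ and reduce the log-Sobolev constant to $\inf_{x\geq mc}\mathscr{P}(x)$, and your claimed diagonalization of $\mathrm{Ric}_g+\mathrm{Hess}_g V$ --- radial eigenvalue exactly $\mathscr{P}(p^0)$, transverse eigenvalue exceeding it by $\frac{((p^0)^2-m^2c^2)(4\theta c p^0+1)}{4mc(p^0)^3}\geq 0$ --- is consistent with the paper's stated Bakry--\'Emery--Ricci tensor and with its lower bound obtained from $g(X,X)\geq \frac{mc}{p^0}|X|^2$, which saturates precisely in the radial direction. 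Your one-variable analysis of $\mathscr{P}$ (the critical points $x_\pm$, the dichotomy at $\theta=4/(mc^2)$, and the positivity check via the envelope argument) is correct and in fact fills in details the paper only sketches.
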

\begin{proof} The proof is carried out by using the Bakry-Emery curvature bound condition~\cite{BE,B} which states that~\eqref{logsob} holds provided the tensor $\widetilde{\mathrm{Ric}}=\mathrm{Ric}-\nabla^2_p\log u$ ---
called the Bakry-Emery-Ricci tensor --- satisfies 
$\widetilde{\mathrm{Ric}}\geq\frac{1}{2\alpha}g$. In the definition of $\widetilde{\mathrm{Ric}}$, $\mathrm{Ric}$ is the Ricci tensor of $g$, while $u$ is the function~\eqref{u}.
In our case  the Bakry-Emery-Ricci tensor reads
\[
\widetilde{\mathrm{Ric}}_{ij}=-\frac{1}{4(p^0)^2}(1+4c\theta (p^0))\delta_{ij}+\frac{6\theta c (p^0)^3-12 (p^0)^2+2\theta m^2c^3 p^0-m^2c^2}{4mc(p^0)^3}g_{ij}.
\]
Now we use
\[
g(X,X)= \frac{p^0}{mc}\left(|X|^2-\frac{(p\cdot X)^2}{(p^0)^2}\right)\geq\frac{mc}{p^0}|X|^2,\quad\text{for all $X\in\R^3$}
\]
and so
\[
\widetilde{\mathrm{Ric}}(X,X)\geq \left[\frac{1}{4mc(p^0)^3}\left(2\theta c(p^0)^3-13(p^0)^2+2\theta m^2c^3p^0-m^2c^2\right)\right]g(X,X).
\]
The function on square brackets is $\mathscr{P}(p_0)$. It is easy to show that
$\min \{\mathscr{P}(p^0), p^0\geq mc\}$ is strictly positive if and only if $\theta>\theta_0$. The value of $(2\alpha)^{-1}$ is obtained by looking for the minimum of $\mathscr{P}$ on $[mc,\infty)$.  
\end{proof}
The condition $\theta>\theta_0$ means that the previous result holds only for small temperatures of the thermal bath, since $\theta\sim T^{-1}$. To prove exponential decay of the entropy for all temperatures one needs to find a substitute for the Bakry-Emery curvature bound condition used in the proof of Theorem~\ref{expdecayen}. Although there are several criteria in the literature for the validity of logarithmic Sobolev inequalities,  we were unable to find one that applies in our situation. Thus we proceed by a different approach. Since the following argument is independent of the dimension, we consider~\eqref{FPRnew} with $p\in\R^N$.  Let us consider, instead of the entropy $\mathfrak{D}[h]$,  the new functional $\mathfrak{L}[h]=\|h\|_{L^2(d\mu_\theta)}^{2}$. Computing the time derivative of $\mathfrak{L}[h-1]$ we obtain
\[
\frac{d}{dt}\mathfrak{L}[h-1](t)=-2\int_{\R^N}g(\partial_ph,\partial_ph)\,d\mu_\theta.
\]
Thus $\mathfrak{L}[h-1]$ decays exponentially, i.e., $h\to 1$ in $L^2(d\mu_\theta)$ exponentially fast, if we show that the following Poincar\'e inequality
\begin{equation}\label{poincareineq}
\int_{\R^N}(h-1)^2d\mu_\theta\leq \lambda\int_{\R^N}g(\partial_ph,\partial_ph)\,d\mu_\theta,\quad\text{for some $\lambda>0$},
\end{equation}
holds for all sufficiently smooth probability densities measure $h$.
The validity of the Poincar\'e inequality~\eqref{poincareineq} is equivalent to the existence of a spectral gap for the operator in the right hand side of~\eqref{FPRnew}, which will now be established by applying a criterion due to Wang, see~\cite{wang}.   
To adhere with the notation in~\cite{wang}, let us rewrite~\eqref{FPRnew} in the form
\begin{equation}\label{FPRnew2}
\partial_th=a^{ij}\partial_{p^i}\partial_{p^j}h+b^j\partial_{p^j}h,\quad t>0,\quad p\in\R^N,
\end{equation}
where
\[
a^{ij}=\frac{mc}{\sqrt{m^2c^2+|p|^2}}\left(\delta^{ij}+\frac{p^ip^j}{m^2c^2}\right),\quad b^j=\left(\frac{Np^j}{mc\sqrt{m^2c^2+|p|^2}}-\frac{\theta}{m}p^j\right).
\]
For $r>0$ define 
\[
\gamma(r)=\sup_{|p|=r}\frac{r[\mathrm{Tr}(a(p))+p\cdot b(p)]}{a^{ij}p_ip_j}-\frac{1}{r},\quad C(r)=\int_1^r\gamma(s)ds,\quad \alpha(r)=\inf_{|p|=r}\frac{a^{ij}p_ ip_ j}{r^2}.
\]
Then by~\cite[Th.3.1]{wang}, the spectral gap for the operator in the right hand side of~\eqref{FPRnew2} is strictly positive provided there exists a function $y\in C([1,\infty))$ such that $\sup_{t\geq 1}G_y(t)<\infty$, where
\[
G_y(t)=\frac{1}{y(t)}\int_1^te^{-C(r)}\int_r^\infty e^{C(s)}\frac{y(s)}{\alpha(s)}\,ds\,dr.
\]
\begin{Theorem}\label{spectralgap}
The Poincar\'e inequality~\eqref{poincareineq} holds for all $\theta>0$.
\end{Theorem}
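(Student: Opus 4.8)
The plan is to verify the hypothesis of the criterion of~\cite[Th.~3.1]{wang} recalled above by exhibiting an admissible weight $y$.

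\emph{Step 1: the data of the criterion are explicit radial functions.} Because $a^{ij}(p)$ and $b^j(p)$ are built only from $|p|$ and the matrix $p^ip^j$, the scalars $\mathrm{Tr}(a)+p\cdot b$ and $a^{ij}p_ip_j$ depend on $p$ only through $r=|p|$, so the $\sup$ and the $\inf$ in the definitions of $\gamma$ and $\alpha$ can be dropped. A direct computation, with $p^0=\sqrt{m^2c^2+r^2}$, gives
\[
a^{ij}p_ip_j=\frac{r^2p^0}{mc},\qquad \alpha(r)=\frac{p^0}{mc},\qquad \mathrm{Tr}(a)+p\cdot b=\frac{Np^0}{mc}+\frac{r^2}{mc\,p^0}-\frac{\theta}{m}r^2,
\]
and therefore
\[
\gamma(r)=\frac{N-1}{r}+\frac{r}{m^2c^2+r^2}-\frac{\theta c\,r}{\sqrt{m^2c^2+r^2}}.
\]
The decisive feature, valid for \emph{every} $\theta>0$, is that $\gamma(r)\to-\theta c<0$ as $r\to\infty$: the confining drift dominates at infinity. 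This is exactly what the Bakry-Emery bound used in the proof of Theorem~\ref{expdecayen} could not detect for small $\theta$, since there one needs a pointwise lower bound by a positive multiple of $g$ rather than mere control at infinity.

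\emph{Step 2: matched two-sided bounds for $e^{\pm C(r)}$.} Integrating $\gamma$ term by term,
\[
C(r)=(N-1)\log r+\tfrac12\log(m^2c^2+r^2)-\theta c\sqrt{m^2c^2+r^2}+\kappa,
\]
with $\kappa$ the constant fixed by $C(1)=0$. Using $r\le\sqrt{m^2c^2+r^2}\le r+mc$ together with $\sqrt{m^2c^2+r^2}\le\sqrt{m^2c^2+1}\,r$ for $r\ge1$, one obtains constants $0<c_1\le c_2$ (depending only on $N,m,c,\theta$) such that
\[
c_1\,r^N e^{-\theta c\,r}\le e^{C(r)}\le c_2\,r^N e^{-\theta c\,r},\qquad e^{-C(r)}\le c_1^{-1}\,r^{-N}e^{\theta c\,r},\qquad r\ge1 .
\]
Note that the \emph{same} exponential rate $\theta c$ appears in the bounds for $e^{C}$ and for $e^{-C}$; this matching is what renders the double integral defining $G_y$ harmless.

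\emph{Step 3: choice of $y$ and conclusion.} Take $y(t)=1+t$. Since $\alpha(s)\ge s/(mc)$ one has $y(s)/\alpha(s)\le 2mc$ for $s\ge1$, so by Step~2 and the elementary estimate $\int_r^\infty s^N e^{-\theta c\,s}\,ds\le K\,r^N e^{-\theta c\,r}$ (valid for $r\ge1$, with $K=K(N,m,c,\theta)$) there is a constant $K'$ with
\[
\int_r^\infty e^{C(s)}\frac{y(s)}{\alpha(s)}\,ds\le K'\,r^N e^{-\theta c\,r},\qquad r\ge1 .
\]
Multiplying by $e^{-C(r)}$ the polynomial and exponential factors cancel, so $e^{-C(r)}\int_r^\infty e^{C(s)}y(s)\alpha(s)^{-1}\,ds\le K'/c_1$ for all $r\ge1$; integrating in $r$ and dividing by $y(t)=1+t$ yields
\[
G_y(t)=\frac{1}{1+t}\int_1^t e^{-C(r)}\!\int_r^\infty e^{C(s)}\frac{y(s)}{\alpha(s)}\,ds\,dr\le\frac{K'}{c_1}\,\frac{t-1}{1+t}\le\frac{K'}{c_1},\qquad t\ge1 .
\]
Hence $\sup_{t\ge1}G_y(t)<\infty$, and by~\cite[Th.~3.1]{wang} the operator on the right-hand side of~\eqref{FPRnew2} has a strictly positive spectral gap; as observed before the statement, this is equivalent to the Poincar\'e inequality~\eqref{poincareineq} for some $\lambda>0$, which is the assertion of the theorem.

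\emph{Expected main difficulty.} There is no essential obstacle, but the step requiring care is Step~2: one must integrate $\gamma$ exactly (or otherwise produce two-sided bounds on $e^{C(r)}$ with the \emph{same} exponential rate), for any discrepancy between the rates in the bounds for $e^{C}$ and $e^{-C}$ would leave $e^{-C(r)}\int_r^\infty e^{C(s)}(\cdots)\,ds$ growing exponentially in $r$ and destroy the bound on $G_y$. The remaining steps are routine manipulations of incomplete-Gamma-type integrals.
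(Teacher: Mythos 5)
Your proof is correct and follows essentially the same route as the paper: both verify Wang's spectral-gap criterion with the same explicit radial data $\gamma(r)$, $C(r)$, $\alpha(r)$, differing only in the choice of the auxiliary weight ($y(t)=1+t$ with matched two-sided bounds on $e^{\pm C}$, versus the paper's $y(t)=e^{\beta t}/t^{N-1}$ with $\beta<\theta c$). The difference is cosmetic, not structural.
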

\begin{proof}
For eq.~\eqref{FPRnew2} the function $G(t)$ is given by
\[
G_y(t)=\frac{mc}{y(t)}\int_1^t\frac{e^{\theta c\sqrt{m^2c^2+r^2}}}{r^{N-1}\sqrt{m^2c^2+r^2}}\int_r^\infty e^{-\theta c\sqrt{m^2c^2+s^2}}s^{N-1}y(s)\,ds\,dr.
\]
Let $\beta<\theta c$ and pick $y(t)=\frac{e^{\beta t} }{t^{N-1}}$. After straightforward estimates we obtain
\[
G_y(t)\leq\frac{mc}{\theta c-\beta}e^{\theta c(\sqrt{m^2c^2+1}-1)}\underbrace{\frac{t^{N-1}}{e^{\beta t}}\int_1^t\frac{e^{\beta r}}{r^N}dr}_{F(t)}.
\] 
Since $\lim_{t\to\infty} F(t)=0$, the result by Wang applies and the theorem is proved. 
\end{proof}

{\bf Note:} While this paper was being written, we have been informed by J.~Angst that he was also able to prove the Poincar\'e inequality~\eqref{poincareineq} and therefore the exponential convergence to equilibrium in $L^2(d\mu_\theta)$ for solutions of~\eqref{FPRnew}. The proof by Angst~\cite{angst} employs a criterion for the existence of a spectral gap to elliptic operators established in~\cite{bakry}. 

{\bf Acknowledgments:} The first author is sponsored by the Mexican National Council for Science and Technology (CONACYT) with scholarship number 214152. The second author worked on this paper during a long term visit to the Center of Mathematics for the Applications (CMA) in Oslo.


\begin{thebibliography}{99}

\bibitem{AC} J.~A.~Alc\'antara, S.~Calogero: On a relativistic Fokker-Planck equation in kinetic theory. {\it Kin.~Rel.~Mod.} {\bf 4}, 401--426 (2011)

\bibitem{angst} J.~Angst: Trends to equilibrium for a class of relativistic diffusions. {\it J.~Math.~Phys.} {\bf 52}, 113703 (2011) 


\bibitem{BE} D.~Bakry, M.~Emery: Hypercontractivit\'e de semi-groupes de diffusion. {\it C.R. Acad. Sc. Paris}. S\'erie I {\bf 299}, 775--778  (1984)

\bibitem{B} D.~Bakry: {\it L'hypercontractivit\'e et son utilisation en th\'eorie des semigroupes.} Lectures Notes in Mathematics {\bf 1581}, Springer (1994)

\bibitem{bakry} D.~Bakry, P.~Cattiaux, A.~Guillin: Rate of convergence for ergodic continuous Markov processes: Lyapunov versus Poincar\'e. {\it J. Funct. Anal.} {\bf 254}, 727-–759 (2008)



\bibitem{C} S.~Calogero: Exponential convergence to equilibrium for kinetic Fokker-Planck equations. To appear in {\it Comm.~Part.~Diff.~Eqs.} Preprint arXiv:1009.5086 

\bibitem{CS}
I.~Csisz\'ar: Information-type measures of difference of probability distributions. {\it Stud. Sc. Math. Hung.} {\bf 2}, 299--318 (1967)
\bibitem{DH1} J.~Dunkel, P.~H\"anggi: Theory of relativistic Brownian motion: The (1+3)-dimensional case. {\it Phys.~Rev.~E} {\bf 72}, 036106 (2005)
\bibitem{DH2} J.~Dunkel, P.~H\"anggi: Relativistic Brownian motion. {\it Phys.~Rep.} {\bf 471}, 1--73 (2009)
\bibitem{Car} 
H.~D.~Victory, B.~P.~O'Dwyer:  On Classical Solutions of Vlasov-Poisson-Fokker-Planck systems. {\it Indiana Univ. Math. J.} {\bf 39}, 105--156 (1990)

\bibitem{haba}Z.~Haba: Relativistic diffusion.
     {\it Phys. Rev. E}, {\bf 79}, 021128 (2009)
     
\bibitem{herr} J.~Herrmann: Diffusion in the special theory of relativity. {\it Phys.~Rev.~E} {\bf 80}, 051110 (2009)

\bibitem{Risken} H.~Risken: {\it The Fokker-Planck equation: methods of solution and applications.} Springer Series in Synergetics {\bf 18}, Springer-Verlag, Berlin (1996)

\bibitem{wang} F.~Y.~Wang: Existence of the spectral gap for elliptic operators. Ark. f\"or Mat. {\bf 37}, 395--407 (1999) 
\end{thebibliography}
\end{document}